\documentclass[a4paper,12pt]{article}
\usepackage{amssymb,amsmath,amsthm}
\usepackage{newpxtext} 
\usepackage{newpxmath} 
\usepackage{graphicx}
\usepackage[dvipsnames]{xcolor}
\usepackage{dsfont} 
\usepackage{tikz}
\usetikzlibrary{patterns}
\usepackage{etoolbox} 

\usepackage{natbib}
\setcitestyle{aysep={}} 

\usepackage{soul}
\sethlcolor{Goldenrod!50}

\makeatletter
\patchcmd{\@maketitle}{\LARGE}{\fontsize{16}{21}\bfseries\selectfont}{}{}
\makeatother

\usepackage[indentafter]{titlesec}
\titleformat{name=\section}{}{\large\thetitle.}{0.8em}{\large}
\titleformat{name=\subsection}[runin]{}{\thetitle.}{0.5em}{\bfseries}[.]
\titleformat{name=\subsubsection}[runin]{}{\thetitle.}{0.5em}{\itshape}[.]
\titleformat{name=\paragraph,numberless}[runin]{}{}{0em}{}[.]
\titlespacing{\paragraph}{0em}{0em}{0.5em}
\titleformat{name=\subparagraph,numberless}[runin]{}{}{0em}{}[.]
\titlespacing{\subparagraph}{0em}{0em}{0.5em}

\usepackage[left=1in,top=1in,right=1in,bottom=1in,headheight=0.8in,foot=0.5in]{geometry}
\usepackage[nodisplayskipstretch]{setspace}
\usepackage{enumitem}
\setlist[enumerate]{itemsep=0mm}

\usepackage{hyperref}

\hypersetup{
	colorlinks=true,
	linkcolor=MidnightBlue,
	citecolor=MidnightBlue,
	urlcolor=MidnightBlue
      }

\newcommand{\RR}{\mathds{R}} 
\newcommand{\CC}{\mathds{C}} 

\newcommand{\hodge}{{\star}}

\DeclareMathOperator{\Imag}{Im}

\let\Re\relax 
\DeclareMathOperator{\Re}{Re}
\newcommand{\noopsort}[1]{} 

\usepackage{thmtools} 
\usepackage{thm-restate}
\theoremstyle{definition}
\newtheorem*{thegaugecriterion}{The Gauge Criterion}

\theoremstyle{plain}

\title{Gauge Matters More Than Curvature}
\date{\today}
\usepackage{authblk}
\author[1]{Henrique Gomes\thanks{gomes.ha@gmail.com}}
\affil[1]{\emph{University of Oxford, University of Bonn}}
\author[2]{Bryan W Roberts\thanks{b.w.roberts@lse.ac.uk}}
\affil[2]{\emph{London School of Economics \& Political Science}}

\begin{document}
\setstretch{1.2}
\maketitle

\begin{abstract}
Recent orthodoxy, which we call \textit{Curvature Fundamentalism}, posits that the field strength $F$ on spacetime exhausts the physically significant content of electromagnetism, at least on prosaic (contractible) spacetimes. We argue that it is gauge invariance that constrains physical significance. As a result, there are physically significant quantities that $F$ does not capture, even on contractible spacetimes. We show that this result is robust: in the presence of charged matter fields, there are interference experiments whose outcome cannot be predicted by $F$ or any other point-local gauge invariant quantities on the basis of initial Cauchy data. We prove this using the example of relative phase of a Klein-Gordon-Maxwell field, and use it to identify a context in which the widely-claimed equivalence between the Maxwell-Faraday formulation $EM_1$ and the gauge formulation $EM_2$ of electromagnetism is untenable, and that new strategies may be needed for the reductionist project.
\end{abstract}


\section{Introduction}
\setstretch{1.4}

What does it mean to say that the world is filled with physical fields, such as electromagnetic fields, Higgs fields, and the like? In electromagnetism, one view amongst philosophers is that there is a preferred field on Minkowski spacetime that determines all the physically significant content of the theory: the `field strength' or `Maxwell-Faraday' or `curvature' field $F$. We refer to this view as \textit{Curvature Fundamentalism}. It has consequences for other aspects of the interpretation of electromagnetism, such as in the literature on theoretical equivalence between scientific theories. For example, \citet[p.1078]{weatherall2016ng} assumes it when he writes that ``the empirical content of a model is exhausted by its associated Faraday tensor'', and uses this to argue that, on Minkowski spacetime, Maxwell's formulation of classical electromagnetism in terms of fields $F$ and $J$ (which he calls $EM_1$) is theoretically equivalent to the gauge formulation in terms of fibre bundle structures (which he calls $EM_2$). Similar usage of Curvature Fundamentalism appears in the broader literature on the philosophy of symmetry in Yang-Mills gauge theories.

We will argue that one should begin with a different postulate: a quantity in gauge electromagnetism is physically significant only if it is gauge invariant. Curvature is a candidate for a physically significant quantity because it is gauge invariant. But, it is not the only such candidate. Our argument begins with the observation that, historically and conceptually, gauge electromagnetism involves \emph{charged matter fields} defined on an associated vector bundle. Given such fields, curvature $F$ and four-current $J$ cannot capture all the physically significant facts about gauge electromagnetism---not even on a contractible manifold like Minkowski spacetime. The quantity that we will use to illustrate this is called \emph{relative phase}. It is widely discussed in physics, but has received limited attention from philosophers. We will use it to illustrate an interference experiment consisting of fields evolving dynamically from an initial data surface to a region of interference, and which is completely invisible to curvature. Indeed, we will exhibit outcomes that cannot be predicted using any `point-local' gauge invariant quantities on the basis of initial Cauchy data.\footnote{By a \emph{point-local quantity} we mean a quantity whose value at a spacetime point $p$ is determined by a finite number of derivatives of fields, which is to say the $k$-jet of the fields at $p$ for some finite $k$.\label{fn:point-local}} The same example shows that the widely claimed equivalence of $EM_1$ and $EM_2$ is not tenable in the presence of charged matter fields, and that alternative strategies may be needed for the reductionist.\footnote{Note that we do \emph{not} deny the mathematical validity of the theorems claiming to establish this equivalence, such as those of \citet[Proposition 5.5]{weatherall2016ng} and \citet[Proposition 2]{chen2024s}. We rather argue that the conclusion of these theorems has more limited scope than has generally been recognised in the philosophical literature, and that they make an awkward comparison between formulations of electromagnetism; see Section \ref{subs:curvature-fundamentalism} for details.}

In summary, we defend three claims, that in a complete formulation of classical electromagnetism with charged matter, the following hold:
\begin{itemize}
    \item \textit{Gauge constrains significance.} Gauge invariance is necessary for physical significance, although other factors like measurement and dynamics matter too.\footnote{There remains further debate as to whether one should \textit{begin} with gauge transformations in the stipulational sense of \citet{baker2023s}, or whether an alternative is preferable, such as the motivational account of \citet{moller-nielsen2017i} or the geometry-first account of \citet{gomes2026vb}. We will not comment on this issue here.}
    \item \textit{Curvature is not enough.} Gauge electromagnetism predicts physically significant behaviour that no finite-order point-local quantities like $(F,J)$ can predict on the basis of initial Cauchy data, even on contractible regions, and even when the four-current $J$ is constrained to be the Noether current of a dynamical field equation.
    \item \textit{The equivalence between $EM_1$ and $EM_2$ has restricted scope.} The Maxwell-Faraday formulation and the gauge formulation may be equivalent on contractible spacetimes when charged matter is fully captured by four-current, but not for general charged matter fields.
\end{itemize}

We begin in Section \ref{sec:gaugeEM} with a review of the historical and conceptual foundations of gauge electromagnetism, where we develop the essential role for charged matter. In Section \ref{sec:curvature-fundamentalism} we introduce and clarify the statement of Curvature Fundamentalism in the philosophical literature. Section \ref{sec:inequivalence} then turns to quantities that go beyond Curvature Fundamentalism, and introduces our main substantial example of relative phase. We argue that this example makes both Curvature Fundamentalism and the equivalence of $EM_1$ and $EM_2$ untenable in electromagnetism with charged matter.

\section{Gauge Electromagnetism}\label{sec:gaugeEM}

\subsection{The origin of gauge}\label{subs:origin-of-gauge} 

In \emph{A Treatise on Electricity and Magnetism}, Maxwell took the primary objects of his study to be the electric and magnetic fields. He then noticed that they could each be written in terms of a potential $A$, and that they were left unchanged by the transformation $A\mapsto A+d\chi$. This led him to conclude,\footnote{Maxwell's complete view on the status of the electromagnetic potential is subtle. In his early work `On Physical Lines of Force', \citet{maxwell1861plf-iia} referred to $A=(F,G,H)$ as the `electrotonic state', suggesting that it captured part of the true state of the electromagnetic system. In the later \emph{Treatise} he switched to calling $A$ the `potential', but still suggested that by removing the unphysical $d\chi$ term, one could determine the ``true values of the components of $A$'' \citep[p.235]{maxwell1873em-ii}.} ``[t]he quantity $\chi$... is not related to any physical phenomenon'' \citep[p.55]{maxwell1873em-ii}. Today we call this a \emph{gauge transformation}. In summary, Maxwell assumed that electric and magnetic fields are physically significant, and then used this fact to argue that gauge transformations are symmetries. In contrast, the gauge formulation of electromagnetism turns this picture around: we \emph{begin} with gauge transformations, and then define a physically significant quantity to be one that those transformations leave invariant. How did this remarkable inverted perspective emerge?

The term \emph{gauge} (\emph{Eichung}) was coined by \citet{weyl1918,weyl1919g}, who proposed a local metrical scaling symmetry that transforms its associated one-form as $A\mapsto A+d\phi$, as part of his attempt to unify general relativity and electromagnetism. This led \citet{kaluza1921z} and \citet{klein1926k} to their famous interpretation of gauge transformations as acting on a fifth dimension of spacetime.\footnote{See \citet[\S 5.3]{roberts2025con} for a brief history and philosophy of Kaluza-Klein theory.} Although \citet[p.967]{kaluza1921z} himself called this an ``extremely odd'' interpretation, all three authors adopted the influential  methodology of taking gauge transformations to determine physical significance, rather than the other way around. This was the birth of what we will call the `Gauge Criterion' for physical significance.

\citet{fock1926g} interpreted this new dimension as simply ``some new parameter with the dimension of the quantum of action'' \citep[p.839]{fock1926g}, rather than a higher spatial dimension. He proposed that the purpose of this new internal degree of freedom $\theta\in [0,2\pi)$ was just ``to ensure the invariance of the equations under the addition of an arbitrary gradient to the four-potential'' on the road to deriving a gauge invariant Klein-Gordon equation \citep[p.840]{fock1926g}. Thus, Fock adopted the view that gauge invariance constrains physical significance, but without the apparent scandal of introducing a new dimension of space. \citet{gordon1926c} and \citet{london1927w} did the same, as did \citet{weyl1929eug} himself when he came to review these developments.

Gauge invariance soon became the guide to physical significance for a vast number of different theories. \citet{yangmills1954g} \cite[and independently][]{shaw1954phd} showed that by replacing the abelian gauge group $G=U(1)$ of electromagnetism with an alternative (non-abelian) group $G=SU(2)$, one could model alternative phenomena like isospin. This laid the foundation for the theory of electroweak interactions, and eventually the Standard Model. \citet{utiyama1956g} generalised gauge groups to include any Lie group $G$, reportedly working independently of \citeauthor{yangmills1954g} \citep[pp.208--9]{oraifeartaigh1997}. Meanwhile, the mathematics of connections on fibre bundles were developed independently by \citet{ehresmann1952c} and his followers, and eventually summarised in a widely-studied textbook by \citet{KobayashiNomizu1963fdg}.

These developments culminated with the discovery that gauge theory appears to make novel physical predictions about electromagnetism. The turning point was the remarkable discovery of \citet{aharonovbohm1959a}, that in the configuration space of a test charge near an impenetrable solenoid, there is a quantity associated with measurable electromagnetic interference around a loop $\lambda$ that is given (in units of $\hbar=c=1$) by,
\begin{equation}\label{eq:ab-relative-phase}
  \theta_\lambda = e^{iq\oint_\lambda A}.
\end{equation}
This is a gauge-invariant quantity whose value can be varied by switching on the solenoid, all while maintaining fixed (vanishing) curvature $F$ in the region accessible to the charge. \citeauthor{aharonovbohm1959a} interpreted this as a measurable interference pattern that is not predicted by Maxwell's theory, and which was quickly verified experimentally by \citet{chambers1960ab}. This surprising development led \citet[p.3845]{wuyang1975g} to argue that the Maxwell-Faraday tensor $F$ ``by itself does not... completely describe all electromagnetic effects on the wavefunction of the electron.'' Instead, gauge invariant quantities like Equation \eqref{eq:ab-relative-phase} are needed as well.\footnote{\citeauthor{wuyang1975g} went on to provide what is now known as the `Wu-Yang dictionary' as a way of translating between physical gauge concepts and fibre bundle concepts. Isadore Singer received this manuscript from Yang while visiting Stony Brook and took it back to Oxford, where Michael Atiyah and other developers of the modern mathematical form of gauge theory were able to study it for the first time \citep[pp.274--5]{JiWang2025yang}. See also \citet[\S 1.3.2]{gomes2025el}.}

We would like to point out two lessons that arose out of the discovery of the gauge formulation of electromagnetism: 1) there is a bundle of `internal degrees of freedom' together with a structure group $G$ acting on each fibre, which describe the properties of potentials and charged matter fields; and 2) although these properties are not all physically significant, the gauge-invariant properties are candidates for physical significance. In this sense, the gauge formulation drops the privileged physical significance of $F$ in  Maxwell's theory. The result is a more complete description of electromagnetism and a more general theory of internal degrees of freedom. We will now briefly review its structure.

\subsection{Principal and Associated Bundles}\label{subs:gauge-formalism}

Yang-Mills gauge theories are defined on a principal $G$-bundle $\pi:P\to M$ for some Lie group $G$, where $G$ is called the \textit{gauge group}, and where the base space $M$ is often understood as a spacetime manifold endowed with the Minkowski metric $\eta$, but could more generally be a configuration space for charged matter. The definition of a principal $G$-bundle includes a free smooth action of $G$ on $P$ from the right, in such a way that the fibres $P_p := \pi^{-1}(p)$ of each point $p$ in $M$ are group orbits, and where the bundle is locally trivial.\footnote{A \textit{locally trivial bundle} with principal fibre $F$ is one which has the local structure of a Cartesian product of spacetime with $F$, in the sense that each point $p\in M$ admits a neighbourhood $U$ and a bundle isomorphism $\phi$ such that each fibre $P_q$ is mapped as $\phi:P_q\mapsto\{q\}\times F$ for all $q\in U$. } Thus, we may interpret the fibre over a point as representing the possible values of `internal structure' at that point, and the gauge group $G$ as smoothly cycling through those values. The principal $G$-bundle is then equipped with a \emph{principal connection} $\alpha$, which provides the notion of a gauge potential.

To represent charged matter, we couple a vector space $V$ to the principal bundle by what is known as the \emph{associated vector bundle}\footnote{Formally, $\rho:G\to GL(V)$ is a representation of $G$ amongst the linear transformations of $V$, and $E=P\times_\rho V$ is the quotient of $P\times V$ by the right action $(p,v)\mapsto(pg,\rho(g^{-1})v)$. Equivalently, sections of $E$ correspond to equivariant maps from $P$ to $V$.} $E := P\times_\rho V$ over $M$. The vector space $V$ may consist of scalars, tensors, or spinors, and may be either real or complex. A \emph{section} of a bundle over $M$ is a map from $M$ back to the bundle that assigns to each point a value in the fibre over it. 

Often there is no global section $s:M\to P$ such that $\pi\circ s$ is the identity. Thus, we often make use of a \textit{local section} $s:U\to P$ whose domain is a local neighbourhood $U\subseteq M$. This allows one to associate the connection $\alpha$ with a real-valued \textit{local potential} one-form $A$ on spacetime defined by $-iA := s^*\alpha$, where $s^*$ denotes the pullback and where we represent the Lie algebra as $\mathfrak{u}(1):=i\RR$. Similarly, a charged matter field is a section $\psi$ of the associated bundle $E$. Relative to a local trivialisation it is represented by a $V$-valued function on $U$, and it is this representative that appears in a typical field equation.

In a general Yang-Mills gauge theory, a physical system is described by at least one\footnote{More are used in the presence of multiple kinds of charged matter, but for simplicity we will stick to just one, as nothing in our argument depends on this.} associated vector bundle $E=P\times_\rho V$ equipped with a connection. Given a local section, this gives rise to a local potential and a charged matter field $(A,\psi)$ on the base manifold. Their dynamical evolution is usually given by the subset of pairs $(A,\psi)$ satisfying a set of field equations, and which are referred to as the \emph{dynamical} or \emph{on-shell} models. All the interactions in the Standard Model can be described using these general structures.

The specific case of \textit{gauge electromagnetism} is any Yang-Mills theory in which the gauge group is $G=U(1)$. Charged matter of charge $q$ takes values in the charge-$q$ representation $\rho_q:U(1)\to GL(V)$, for non-zero integer values of $q$. One could again adopt a variety of different fields, but for simplicity we will restrict attention to the complex scalar field $V=\CC$, for which this representation has the form $\rho_q(e^{i\beta}) = e^{iq\beta}$. For a given choice of local section or `gauge', a \emph{(local) model of gauge electromagnetism} is a pair $(A,\psi)$, where $A$ is a local potential and $\psi$ is a charged matter field on spacetime. On each local section, the connection induces a \textit{covariant derivative} on the associated bundle, whose local expression for a charge-$q$ scalar field is $D_A := d - iqA$. These structures can be shown\footnote{Let $(u,z)\sim\bigl(uh,\rho_q(h^{-1})z\bigr)$ be the equivalence relation defining
$P\times_{\rho_q}V$ as a quotient space. Then, for each smooth map $g:U\to U(1)$, a change of local section from $s \mapsto s'=sg^{-1}$ induces a transformation $\psi\mapsto \psi'=\rho_q(g)\psi$ of the charged matter field. Writing $g=e^{i\chi}$ and using our definition $\rho_q(e^{i\chi})=e^{iq\chi}$, we get $\psi'=e^{iq\chi}\psi$ as claimed. Similarly, let $-iA=s^*\alpha$. The same $s\mapsto s'=sg^{-1}$ gives $(s')^*\alpha = s^*\alpha - (dg)g^{-1}$. Using $g=e^{i\chi}$ then gives $(s')^*\alpha = s^*\alpha - i(d\chi)$. This in turn implies $-iA' = -iA - i(d\chi)$, and hence $A' = A+d\chi$. The transformation rule for $D_A$ immediately follows.} to transform under a gauge transformation as,
\begin{align}
    A \mapsto A + d\chi && \psi\mapsto e^{iq\chi}\psi && D_A\psi\mapsto e^{iq\chi}D_A\psi
\end{align}
for some real scalar field $\chi$.  Two models $(A,\psi)$ and $(A',\psi')$ are said to be members of the same \textit{gauge equivalence class} whenever they are related by a gauge transformation,\footnote{This notion of a model is close in spirit to what \citet[p.674]{nguyentehwells2020g} call the ``groupoid of gauge fields $\mathcal{C}_A$'', whose objects are gauge fields and whose morphisms are gauge transformations. However, we do not wish to take a position on what the appropriate \textit{category} is for interpreting gauge electromagnetism, which is the subject of debate in the theory equivalence literature following \citet{weatherall2016ug}.}  $A'=A+d\chi$ and $\psi'=e^{iq\chi}\psi$ for some $\chi$.

We can now state the definition of physical significance that arose out of our brief history of Yang-Mills gauge theory.
\begin{thegaugecriterion}
A quantity in a Yang-Mills theory is physically significant only if it is invariant under gauge transformations.
\end{thegaugecriterion}
We emphasise that this criterion states a necessary condition, not a sufficient one. Gauge invariance guarantees that a quantity is well defined on gauge equivalence classes of models. However, whether a given gauge-invariant quantity is physically significant may have further meaning that is only settled by the dynamics and by the existence of a measurement protocol. Thus, to avoid arbitrarily gerrymandered functionals of fields that are gauge invariant without playing any measurement or explanatory role, our main example in Section \ref{sec:inequivalence} will include an explicit dynamics and measurement protocol.

One important object of study in general Yang-Mills gauge theory is the \emph{curvature} $\mathcal{F}$ of a connection $\alpha$, defined by $\mathcal{F} := d\alpha + \tfrac{1}{2}[\alpha,\alpha]$. Its representation in a given section can be written\footnote{We slightly abuse notation by using the same symbol $d$ to represent the exterior derivative on $P$ (in the definition of $\mathcal{F}$) and on $M$ (in the definition of $F$). Since each manifold admits a unique exterior derivative, the meaning will be clear from context.} $F = dA + \tfrac{1}{2}[A,A]$. Another important object is the \emph{four-current} $J$, which is the Noether current associated with the dynamical field equations. These structures are taken\footnote{The definition of the Hodge star operator makes use of spacetime structure on the underlying base manifold: if $(M,g)$ is an oriented Lorentzian manifold of dimension $n$, then $\hodge$ is the unique linear map from $p$-forms to $n-p$ forms such that every pair of $p$-forms $\alpha,\beta$ satisfies $\alpha\wedge\hodge\beta = \langle\alpha,\beta\rangle\mathrm{vol}$, where $\mathrm{vol}$ is the volume form determined by $g$ and the chosen orientation.} to satisfy the Yang-Mills equation $J = \pm\hodge D_A \hodge F$, which describes how dynamical evolution is constrained by curvature.

In electromagnetism, for which the gauge group $G=U(1)$ is abelian, curvature reduces to the simple form,
\begin{equation}
  \mathcal{F}=d\alpha.
\end{equation}
This quantity can be defined in a neighbourhood of each spacetime point: for any local section $s$ with $A := i(s^*\alpha)$, the quantity $F := i(s^*\mathcal{F}) = dA$ is an antisymmetric rank-$(0,2)$ tensor field. It is gauge invariant:
\begin{equation}
  F\mapsto d(A+d\chi) = dA + d^2\chi = dA = F,
\end{equation}
where we use the fact that $d^2=0$. Conversely, on a contractible region $U\subseteq M$ of the base manifold, curvature can always be written as $F=dA$ for some potential $A$ that is unique up to a gauge transformation.\footnote{On a contractible region, the Poincar\'e lemma ensures that $dF=0$ implies $F=dA$ for some $A$. Moreover, if $F=dA=dA'$, then $d(A'-A)=0$ and so by another application of the Poincar\'e lemma we have that $A' - A = d\chi$ for some scalar field $\chi$, i.e. $A'$ and $A$ are related by a gauge transformation.} When $F=dA$, this curvature tensor is automatically closed ($dF=0$). This fact can be used to identify a model of the Maxwell-Faraday formulation: the equation $dF=0$, together with the Yang-Mills equation $J = \hodge d\hodge F$, together provide an invariant expression of Maxwell's equations.\footnote{For a two-form $F$ related to $E$ and $B$ in the standard way for a given reference frame, $dF=0$ is equivalent to the joint statement of Faraday's law ($\nabla\times E = -\partial B/\partial t$) and no magnetic monopoles ($\nabla\cdot B=0$). Similarly, $J = \hodge d\hodge F$ is equivalent to the joint statement of Gauss's law ($\nabla \cdot E = \rho$) and the Amp\`ere-Maxwell law ($\nabla\times B = J + \partial E/\partial t$).} For this reason, the representation of curvature $F$ on the base manifold $M$ in $EM_2$ can be identified with the Maxwell-Faraday (or `electromagnetic field strength') tensor in $EM_1$.

\subsection{Classical Charged Matter Fields}\label{subs:classical-charged-matter}

Before moving on, we would like to briefly comment on the nature of charged matter in the gauge formulation, which involves an associated bundle $E=P\times_\rho V$. The vector space $V$ is typically complex, and for that reason charged matter fields $\psi$ have often been mistaken for intrinsically quantum objects. We argue that this is not necessary: these fields $\psi$ can be interpreted as fully within the scope of classical electromagnetism.

The ubiquity of complex-valued matter fields in electromagnetism arises out of the $U(1)$ gauge group: the continuous transformation of a field by $U(1)$ requires a representation of rotations, which in turn requires at least two real parameters if it is to be non-trivial. In contrast, rotations can be equivalently described using just one complex parameter, which is both ecumenical and convenient. In either case, the smooth assignment of numbers (real or complex) to points on spacetime defines what is commonly known as a \emph{classical scalar field}, with no deep foundational significance arising from the use of one or the other.\footnote{Some early developers of quantum mechanics seem to have thought that $c$-numbers must be real-valued for the purposes of quantisation, but this is not a mathematical or a conceptual necessity for quantisation \citep[cf.][]{roberts2018observables}.} Although complex numbers are convenient for representing the gauge fields of electromagnetism, they do not by themselves require any further quantum mechanical or Hilbert space structure.

However, one could be forgiven for being confused about this: the founders of gauge theory like \citet{fock1926g}, \citet{london1927w}, \citet{weyl1929eug}, \citet{yangmills1954g}, and \citet{aharonovbohm1959a} all cite quantum theory as motivating the definition of the charged matter field $\psi$. However, early writers like \citet{london1927w} also interpreted the quantum state $\psi$ following Broglie's matter-wave proposal, with its defining feature being that it is a complex-valued classical field. In contrast, from a modern perspective, being complex-valued alone does not make something quantum.\footnote{As \citet[p.212]{wallace2009anti} writes: ``\emph{complex fields are just special cases of real fields.} After all, `complex fields' are just as valid as real fields in classical field theory: they are just ordered pairs of real fields (of twice the dimension); any complex-linear theory is also a real-linear theory. Indeed, from this perspective, complex Klein-Gordon theory is just a special case of real Klein-Gordon theory with an internal degree of freedom.''} A complex-valued field must still be \textit{quantised} in order to produce the structure of a functioning quantum field theory, which includes being given a statistical interpretation. Merely specifying a complex scalar, Weyl, or Dirac field does not provide any of that. Of course, the experimental verification of the Aharonov-Bohm effect makes use of ordinary quantum interference, which led \citet[p.1]{caprez2007macroscopic} to refer to it as ``an essentially quantum mechanical effect.''\footnote{\citet[p.4859]{shech2018ab} repeats this point, writing that ``quantum and classical theories make strikingly different predictions'' regarding the Aharonov-Bohm effect. We disagree: our view is rather that it is gauge electromagnetism ($EM_2$) and spacetime field electromagnetism ($EM_1$) that make strikingly different predictions.} However, its conceptual basis --- meaning, the derivation of the effect from first principles --- arises entirely from the classical fields of an associated bundle model in gauge theory. In our view, a more correct thing to say about the Aharonov-Bohm effect is that it is an essentially \emph{gauge theoretical} effect, in that it cannot be predicted using the tensorial quantities $(F,J)$ that constitute Maxwell's formulation of electromagnetism. A similar claim can be made about the relative phase experiment we describe in Section \ref{sec:inequivalence} below.

\citet[p.2013]{earman2019ab} refers to the use of complex fields in gauge electromagnetism as a ``bastardized theory in which a quantized electron is subjected [to] an external classical electromagnetic field''. That may be true of certain semiclassical models of the electron, although see \citet{dougherty2021non} for a response. But, in the Yang-Mills approach to gauge electromagnetism, our view is rosier: complex fields are part of a perfectly unadulterated classical model deriving from an associated bundle. For a given section, the field representatives $(A,\psi)$ have the property that $\psi$ may be complex-valued. But, until it is quantised, we are still doing classical electromagnetism. So, let us now turn to the philosophy literature that seeks to interpret these classical structures. 

\section{Fundamentalist interpretations}\label{sec:curvature-fundamentalism}

\subsection{Curvature fundamentalism}\label{subs:curvature-fundamentalism}
There is a piece of orthodoxy that has recently emerged in the philosophy of physics, which will be the main target of our critique. We refer to it as \emph{Curvature Fundamentalism:} the claim that the physics of a model of electromagnetism is reducible to its curvature $F$. This idea is generally qualified by the assumption that the base space $M$ is contractible, since otherwise the Aharonov-Bohm effect poses an immediate challenge: the quantity $\theta_\lambda$ in Equation \eqref{eq:ab-relative-phase} is not reducible to $F$. But, with this qualification in place, and given the Gauge Criterion that physically significant quantities are gauge invariant, a first pass at stating Curvature Fundamentalism is the following:
\begin{quote}
  ($CF_1$) \emph{Determinacy:} If a quantity is gauge invariant,
  it is determined by $F$.
\end{quote}
For example, \citet[p.542]{belot1998em} writes that ``$E$ and $B$ capture almost
all of the gauge-invariant content of electromagnetism'', with the qualification ``almost'' just included to cover the Aharonov-Bohm effect. Similarly, \citet[494]{dewar2019s} states that on a contractible manifold, ``the electromagnetic field [$F$] determines all gauge-invariant quantities''.

In the presence of a charged matter field $\psi$, this expression of Curvature Fundamentalism is false for an uninteresting reason: if a set of models of the form $(A,\psi)$ is not further constrained, then $F=dA$ at most determines $A$ (up to a gauge transformation), not $\psi$. As a result, there are simple gauge invariant quantities\footnote{For a complex-valued scalar field $\psi$, the quantity $|\psi|^2=\bar{\psi}\psi$ is both gauge invariant and not determined by $F$. This continues to be true when $\psi$ is required to satisfy field equations like the coupled Maxwell-Klein-Gordon equations. We discuss this kind of example in more detail in Section \ref{sec:inequivalence}.} like $|\psi|$ that are clearly not determined by $F$, violating Determinacy. But, the spirit of Curvature Fundamentalism is perhaps rather that the `physically significant' quantities are still deeply constrained by $F$. To feel the plausibility of this idea, recall that $F$ constrains $J$ via the Yang-Mills equation $J=\hodge d\hodge F$. This in turn constrains $\psi$, since $J$ is the Noether current for its field equations, and which themselves usually depend non-trivially on $A$. If $\psi$ were so constrained as to be \emph{determined} by $F$, then Curvature Fundamentalism really gets some grip. This determination might be further supported by constraints owing to the nature of physical measuring devices. So, it is a much less trivial question whether there are \emph{physically significant} quantities that are not determined by $F$. This leads to a somewhat more careful expression of Curvature Fundamentalism, again assuming the qualification of contractibility:
\begin{quote}
  ($CF_2$) \emph{Empirical Sufficiency}. If a quantity is physically significant, then it is determined by $F$.
\end{quote}

This expression of Curvature Fundamentalism is widespread in the philosophy of physics: \citet[p.1042]{weatherall2016ug} states that ``one may take the empirical content of electromagnetism to be fully exhausted by the electromagnetic field [$F$]'', and repeats a similar claim in several related articles.\footnote{For example, \citet[p.1078]{weatherall2016ng} writes ``on both formulations, the empirical content of a model is exhausted by its associated Faraday tensor'', and \citet[p.3]{weatherall2019tep} says ``ultimately, the empirical significance of a model of
  electromagnetism is entirely encoded in the electromagnetic field strength''.} Similarly, \citet[p.673--4]{nguyentehwells2020g} write that ``the empirical content of the theory can be expressed solely in terms of the gauge-invariant fields $F$''. More recently, \citet[p.14]{BradleyWeatherall2026s} write that ``the empirical significance of the vector potential always runs through its associated Faraday tensor''.

A third expression of Curvature Fundamentalism in the literature relates to the ontology of electromagnetism. Some have pointed out that, given the Empirical Parsimony principle that the ontology of a theory consists in its physically significant quantities, Empirical Sufficiency implies:
\begin{quote}
  ($CF_3$) \emph{Ontological Primacy}. If a quantity is in the ontology of electromagnetism, then it is determined by $F$.
\end{quote}
For example, \citet[p.542]{belot1998em} refers to the traditional interpretation of electromagnetism as stipulating that ``the ontology of electromagnetism consists of physically real electric and magnetic fields''. \citet{moller-nielsen2017i} argues for this statement on the basis of Determinacy and Parsimony, concluding that $F$ should be ``taken to represent the genuine material ontology of the theory''. \citet{jacobs2023mfb} refers to a similar view as `$F$-realism', although he does not endorse it. The statement of and relationship between these versions of Curvature Fundamentalism are summarised in Table \ref{tab:curvature-fundamentalism}.

\begin{table}[t]
\centering
\renewcommand{\arraystretch}{1.3}
\begin{tabular}{p{0.32\linewidth} p{0.60\linewidth}}
\textbf{Commitment} & \textbf{Statement} \\
  \hline
\emph{Gauge Criterion}
    & Physically significant $\Rightarrow$ Gauge Invariant. \\
($CF_1$) \emph{Determinacy}
    & Gauge invariant $\Rightarrow$ Determined by $F$. \\
($CF_2$) \emph{Empirical Sufficiency}
    & Physically significant $\Rightarrow$ Determined by $F$. \\
\emph{Empirical Parsimony}
    & Ontology $\Rightarrow$ Physically significant. \\
($CF_3$) \emph{Ontological Primacy}
    & Ontology $\Rightarrow$ Determined by $F$. \\
\hline\hline
\multicolumn{2}{l}{(\emph{Gauge Criterion}) $+$ (\emph{Determinacy}) $\Rightarrow$ (\emph{Empirical Sufficiency})} \\
\multicolumn{2}{l}{(\emph{Empirical Sufficiency}) $+$ (\emph{Empirical Parsimony}) $\Rightarrow$ (\emph{Ontological Primacy})} \\
\hline
\end{tabular}
\caption{Variations of Curvature Fundamentalism and their logical relationships.}
\label{tab:curvature-fundamentalism}
\end{table}

Curvature fundamentalism can indeed be motivated by a Maxwell-inspired ontology for electromagnetism: suppose, as Maxwell did, that all the theory's physically significant facts are captured by the electric and magnetic fields (as expressed by $F$) together with the four-current $J$. Since $F$ determines $J := \hodge d\hodge F$, this would imply that the ontology, and indeed every physically significant fact about electromagnetism, is determined by $F$.

However, sometimes philosophers go on to import this commitment into the gauge formulation of electromagnetism as well. For example, \citet[p.1041]{weatherall2016ug} only considers models of gauge electromagnetism in which charged matter is captured entirely by the four-current. In that case, since charged matter is determined by $J$, it is also determined by $F$. This is the basis for the argument of \citet{weatherall2016ng,weatherall2016ug} that $EM_1$ and $EM_2$ are equivalent, and for the argument of \citet{BradleyWeatherall2026s} against sophistication about symmetries.

We are not convinced. Since classical gauge electromagnetism also includes models with charged matter, this characterisation of its ontology is incomplete. Or, at best, it depends on a hidden assumption, that $F$ determines the behaviour of all classical charged matter fields satisfying a physical field equation (at least when $M$ is contractible). In the next section, we will show that this hidden assumption is not true, rendering all three formulations of Curvature Fundamentalism untenable. In this context, arguments for the equivalence of $EM_1$ and $EM_2$ come apart as well, as do arguments against sophistication about symmetries, since both are only plausible given Curvature Fundamentalism. 

That said, we agree that Curvature Fundamentalism is true in the special case of electromagnetism on contractible manifolds and in which charged matter either vanishes or is determined entirely by the four-current.\footnote{\citet{weatherall2016ng,weatherall2016ug} is explicit about adopting this restriction, characterising the models of gauge electromagnetism as having the form $(\RR^4,\eta,A)$ rather than a structure like $(M,\eta,A,\psi)$. As a result, we do not deny the validity of his Proposition 5.5, nor of related statements like Proposition 8 of \citet{dewar2019s} or Proposition 2 of \citet{chen2024s}. We rather find these results to be artificially limited in scope.} In this case, Determinacy holds: either the model does not include any associated bundle, or if it does then the field $\psi$ determined by a section is also determined entirely by $J = \hodge d \hodge F$. In this case, curvature $F$ determines $J$, which determines $\psi$. Since the manifold is assumed to be contractible, curvature also determines the local potential $A$ up to a gauge transformation. Thus, this is a case in which $F$ determines $A$ up to a gauge transformation and, by hypothesis, determines $\psi$ through $J$; hence it determines all gauge invariant quantities constructed from the model. As we have already seen, this fact of Determinacy together with the Gauge Criterion implies that all physically significant quantities are determined by $F$.

However, that argument does not apply to gauge electromagnetism in its full generality. On both the historical and conceptual development introduced above, gauge electromagnetism includes charged matter fields given by sections of an associated vector bundle, which are not generally determined by the four-current. This makes the situation in which $EM_1$ and $EM_2$ are equivalent a rather awkward comparison, even on contractible manifolds: it is a comparison between,
\begin{enumerate}
  \item Maxwell-Faraday models $(F,J)$, and
  \item Gauge models $(A,\psi)$ restricted to the sector in which $\psi$ either vanishes or is uniquely determined on-shell by $J$.
\end{enumerate}
The awkwardness arises from the fact that, on the Maxwell-Faraday formulation, a model $(F,J)$ just \emph{is} the theory's description of charged matter, whereas on the gauge formulation it is not.

As a result, many philosophical analyses of the equivalence between $EM_1$ and $EM_2$ are incomplete: one must still say whether the two theories are equivalent in the presence of charged matter fields. This is not to say that philosophers have not given detailed interpretations of charged matter fields as they arise from vector bundles. For example, \citet{weatherall2016ymgr} has proposed to view the principal fibre bundle as coordinating how different charged matter fields like electrons and muons ``all respond to the same electromagnetic influences'' \citep[p.2405]{weatherall2016ymgr}.\footnote{For critical discussion of this view see \citet[\S 4.4]{menon2018phd}, \citet[p.40]{jacobs2023mfb}, and \citet{gomes2026vb}. See also \citet{MarchWeatherall2025puz} and \citet{WeatherallMarch2026nt} for a recent view on the role of the principal bundle in defining an appropriate notion of general covariance.} However, Curvature Fundamentalism is concerned with a different question, that of what the physically significant quantities of gauge electromagnetism are, and whether they can be reduced to curvature. In the next section, we will argue that this sort of Curvature Fundamentalist reduction is not possible in the complete theory of gauge electromagnetism.

\section{Beyond Fundamentalism}\label{sec:inequivalence}

Let's take stock. In Section \ref{subs:curvature-fundamentalism}, we pointed out that the simplest expressions of Curvature Fundamentalism cannot be true. Namely, the claim of Determinacy ($CF_1$), that gauge invariant quantities are determined by curvature, is false: $|\psi|$ is formally gauge invariant but not determined by $F$. A popular alternative is to fall back on Empirical Sufficiency ($CF_2$), that the `physically significant' quantities are determined by $F$, or Ontological Primacy ($CF_3$), that the physical ontology is determined by $F$. In this section, we provide a counterexample that shows both these expressions of Curvature Fundamentalism are untenable as well. It also establishes a sense in which $EM_1$ and $EM_2$ fail to be theoretically equivalent.

We will begin with an easy warm-up example that we refer to as the `kinetic momentum' of a scalar field, which cannot be so quickly dismissed as `not physically significant'. However, this simple warm-up does not impose a plausible dynamical field equation, and it does not distinguish between $EM_1$ and $EM_2$. We thus follow it with the more substantial counterexample of `relative phase'. This shows that there are physically significant quantities that satisfy a dynamical field equation and which are not determined by $F$.

\subsection{Warm-Up Example: Kinetic Momentum} Consider a model $(A,\psi)$ of the gauge formulation, where $\psi$ is any complex scalar field of charge $q$ that is non-zero in some open region. Let $D_A = d - iqA$ be the covariant derivative. The \emph{kinetic momentum} of the field in this region is a local quantity given by $\mu = \Imag\left(\bar\psi D_A\psi/|\psi|^2\right)$. Writing $\sigma := |\psi|$ and $\psi=\sigma e^{i\theta}$, it can always be expressed locally\footnote{Let $D_A\psi=e^{i\theta}\bigl(d\sigma+i\sigma(d\theta-qA)\bigr)$. Then $\bar\psi D_A\psi/|\psi|^2=d\sigma/\sigma+i(d\theta-qA)$, whose imaginary part is $d\theta-qA$. This quantity appears in the Higgs mechanism \citep[p.1159]{higgs1966s}, and its significance is discussed by \citet[\S 4.2]{Struyve2011}, \citet[p.10]{wallace2014ab}, and \citet{francois2019gw}.}
\begin{equation}\label{eq:mu-def}
  \mu=d\theta-qA. 
\end{equation}
The physical significance of $\mu$ as `momentum' arises from the fact\footnote{For a non-relativistic particle with minimally coupled Hamiltonian $h(Q,P)=(2m)^{-1}|P-qA|^2+q\Phi$, Hamilton's equation gives $v=m^{-1}(P-qA)^\sharp$. In Hamilton--Jacobi theory $P=d\theta$, so $\mu=d\theta-qA$.} that the kinetic energy for a minimally coupled Hamiltonian in the Hamilton--Jacobi (or eikonal) regime is given by $\mu^2/2m$, as one would expect from a quantity $\mu$ representing momentum. It is also manifestly gauge invariant: since $\theta\mapsto\theta+q\chi$ and $A\mapsto A+d\chi$, we get that $\mu\mapsto\mu$ is left unchanged.

Despite its gauge invariance, the quantity $\mu$ is not determined by curvature, since Equation \eqref{eq:mu-def} immediately implies that
\begin{equation}
  F = dA = -d\mu/q.
\end{equation}
So, although $F$ determines the `rate of change' of $\mu$ in the sense of determining its exterior derivative, it is not enough to determine $\mu$ itself.\footnote{The latter can be checked explicitly by keeping the same $A$ and $\sigma$ fixed, and replacing $\psi=\sigma e^{i\theta}$ by $\psi'=\sigma e^{i(\theta+f)}$ for any $f$ such that $df\neq0$. These two configurations have the same $F$, but $\mu'=\mu+df \neq \mu$.} As one might expect, kinetic momentum can be used to generate a variety of further gauge invariant quantities that are not determined by curvature.\footnote{For example, the gauge invariant \emph{kinetic scalar} can be defined using a spacetime metric $g_{ab}$ on $M$ as $\mathcal K[A,\psi] := g^{ab}\overline{D_a\psi}D_b\psi = g^{ab}(\partial_a\sigma)(\partial_b\sigma)+\sigma^2g^{ab}\mu_a\mu_b$. It is gauge invariant, depends non-trivially on $A$ through minimal coupling, and is not determined by $F$.}

The physical significance of kinetic momentum $\mu$ is perhaps harder to dismiss than $|\psi|$. However, the Curvature Fundamentalist may remain sceptical. Our example here is `kinematic' in the sense that we have not provided any explicit field equation that explains how the model $(A,\psi)$ evolves dynamically. Moreover, when equipped with a minimally coupled dynamics, the four-current $J$ turns out\footnote{The minimally coupled Klein--Gordon Lagrangian is $\mathcal L_\psi=-\frac12\overline{D_a\psi}D^a\psi-\frac12m^2|\psi|^2$. Since $\delta_A(D_a\psi)=-iq\psi\delta A_a$, this implies the Noether current associated with phase symmetry is $\delta_A\mathcal L_\psi=q\Imag(\bar\psi D^a\psi)\delta A_a$. The four-current is then given by $J = q\Imag(\bar\psi D^a\psi) = q\sigma^2\mu$.} to be proportional to $\mu$,
\begin{equation}\label{eq:matter-current}
  J=q\sigma^2\mu.
\end{equation}
Curvature already determines $J$ via Maxwell's equations. So, if some further fact about the dynamics were to guarantee that $F$ determines $\sigma^2=|\psi|^2$ as well, then $F$ actually \emph{would} determine the quantity $\mu$ whenever $\sigma\neq 0$, by the equation $\mu = J/q\sigma^2$.

However, the Curvature Fundamentalist cannot always recover the missing information in this way. Our next example will show that, even when we restrict attention to models that satisfy a field equation, there are \emph{still} gauge invariant quantities that cannot be determined by $F$ alone.

\subsection{Main Example: Relative phase}\label{subs:relative-phase}

Let $(A,\psi)$ again be the field representatives for a given section, where $\psi$ is a smooth scalar field that is non-zero in some region. The base manifold $M$ may or may not be contractible. The \emph{unnormalised relative phase} along a curve $\lambda:[0,1]\to M$ with endpoints $a=\lambda(0)$ and $b=\lambda(1)$ is defined by
\begin{equation}\label{eq:unnormalised-relative-phase}
  \widetilde{\Theta}_{A,\psi}(\lambda)
  :=\bar\psi(b)\exp\left(iq\int_\lambda A\right)\psi(a).
\end{equation}
This quantity looks formally similar to the Aharonov-Bohm quantity $\theta_\lambda$ of Equation \eqref{eq:ab-relative-phase}. But, it lacks the requirement that $\lambda$ is a loop, and instead includes a charged matter field $\psi$, as shown in Figure \ref{fig:lambda-curve}. Nevertheless, it is gauge invariant.\footnote{Under $A\mapsto A+d\chi$ and $\psi\mapsto e^{iq\chi}\psi$, the endpoint factors contribute $e^{iq\chi(a)}e^{-iq\chi(b)}$, while the parallel-transport factor contributes $e^{iq\chi(b)}e^{-iq\chi(a)}$. They cancel.}

\begin{figure}[tbh]
  \centering
  \begin{tikzpicture}[
      x={0.2\textwidth},
      y={0.2\textwidth},
    ]
    \draw[very thick]
    (0.5,0.5)
    .. controls (0.67,0.5) and (0.83,0.72) ..
    (1.0,0.72)
    .. controls (1.25,0.72) and (1.75,0.28) ..
    (2.0,0.28)
    .. controls (2.17,0.28) and (2.33,0.5) ..
    (2.5,0.5);
    \node[above=4pt] at (1.5,0.5) {$\lambda$};

    \fill (0.5,0.5) circle (1.6pt);
    \fill (2.5,0.5) circle (1.6pt);
    \node[above left=2pt] at (0.5,0.5) {$a$};
    \node[above right=2pt] at (2.5,0.5) {$b$};
    \node at (0.5,0.3) {$\psi(a)$};
    \node at (2.5,0.3) {$\psi(b)$};
    
  \end{tikzpicture}
  \caption{The relative phase of a model $(A,\psi)$ of gauge electromagnetism is defined with respect to a curve $\lambda$ with endpoints $a$ and $b$.}\label{fig:lambda-curve}
\end{figure}

It also still contains information about the amplitude of $\psi$ at the two endpoints $a$ and $b$. To recover pure phase information, which we will shortly use to construct an interference experiment, we normalise to produce what we call the \emph{relative phase}, given by 
\begin{equation}\label{eq:relative-phase}
  \Theta_{A,\psi}(\lambda) := \frac{\widetilde{\Theta}_{A,\psi}(\lambda)}{|\psi(b)||\psi(a)|}.
\end{equation}
Relative phase is $U(1)$-valued whenever $\psi(a)$ and $\psi(b)$ are non-zero,\footnote{Since $\big|\exp\left(iq\int_\lambda A\right)\big|=1$, we have $|\widetilde{\Theta}_{A,\psi}(\lambda)| = |\psi(b)||\psi(a)|$, and hence that $|\Theta_{A,\psi}(\lambda)|=1$ as well.} in that $\Theta_{A,\psi}(\lambda)=e^{i\theta}$ for some real parameter $\theta$. Like its unnormalised counterpart, it is a gauge invariant quantity. Relative phase is also not determined by $F$ alone: let $A=0$, in which case $F=0$ as well. Writing $\psi(x) = |\psi(x)|e^{i\theta(x)}$, it follows that
\begin{equation}
  \Theta_{0,\psi}(\lambda) = \frac{\bar\psi(b)\psi(a)}{|\psi(b)||\psi(a)|} = e^{i\big(\theta(a) - \theta(b)\big)}.
\end{equation}
So, relative phase measures the differences in phase at the endpoints $a$ and $b$ of the curve, even when $A$ and $F$ both vanish. This provides another counterexample to the claim that every gauge invariant quantity is determined by $F$. In the remainder of this section, we will argue that it also has substantial physical significance, including for dynamically evolving fields associated with concrete physical experiments. This makes it a natural part of the ontology of classical electromagnetism, and motivates our conclusion that Curvature Fundamentalism is untenable. 

Relative phase is non-local, in that it depends on the values of the matter and potential fields at more than one spacetime point, which are determined by the path $\lambda$. There is a substantial mathematics and physics literature on path-dependent quantities like this. For example, \citet{giles1981wilson} reconstructs gauge potentials from Wilson loops; \citet{barrett1991holonomy} reconstructs bundles with connection from holonomy data; and \citet{schreiberwaldorf2009transport} characterise connections as transport functors on the thin path groupoid, to name a few. They have also been widely discussed as evidence of non-separability in classical electromagnetism. For example, the non-local quantity of \citeauthor{aharonovbohm1959a} appears centrally in the `holonomy interpretation' of \citet{belot1998em}, \citet{healey2007book}, and \citet{myrvold2011n}. Relative phase itself appears in the `relational' interpretation of \citet{rovelli2014gauge}. \citet{francois2019gw} associates non-local quantities with the presence of `substantial' (as compared to `artificial') gauge invariant quantities. And, such quantities feature in the  `holism' view of gauge \citep{gomes2021h,gomes2021phd}. In contrast, the recent literature on Curvature Fundamentalism has focused on point-local quantities like tensor or spinor fields and their derivatives on spacetime.\footnote{See our definition of `point-local' quantities in footnote \ref{fn:point-local}.}

What we would now like to point out is that, even when a model $(A,\psi)$ is constrained by a field equation on a contractible base manifold, there are systems for which,
\begin{enumerate}
  \item different models $(A,\psi)$ and $(A',\psi')$ of gauge electromagnetism can evolve dynamically from initially prepared field packets to a final measured interference pattern predicted by relative phase; and,
  \item  these models have vanishing curvature and four-current throughout the evolution, as well as an initial Cauchy surface on which all point-local gauge invariant quantities agree, which implies that no such quantities can predict the interference pattern.
\end{enumerate}
We take the existence of such a measurement procedure to establish both the physical and ontological significance of relative phase, contradicting Curvature Fundamentalism in all its forms. The inability of $(F,J)$ (or any point-local gauge invariant quantity) to predict this phenomenon then shows a sense in which $EM_1$ and $EM_2$ are not equivalent---not even on a contractible manifold. 

Not just any model of gauge electromagnetism will establish our claims. The interesting case arises when the endpoints of the curve $\lambda$ lie in two separated `packets' of charged matter with an empty region between them, as illustrated in Figure \ref{fig:two-packets}. The phase of each packet can then be varied without changing any gauge invariant point-local facts about either packet, giving rise to a measurable difference in relative phase without any difference in $F$ or $J$.

\begin{figure}[tbh]
\centering
\begin{tikzpicture}[
    plane/.style={line width=0.7pt},
    wave/.style={line width=0.45pt},
    every node/.style={font=\small}
]
\begin{scope}[yshift=0cm]
  \begin{scope}
    \clip (0,0) -- (9,0) -- (7.4,2.2) -- (1.6,2.2) -- cycle;
    \foreach \r in {0.15,0.35,0.55,0.75,0.95,1.15,1.35} {
      \draw[wave]
        (3.25,1.15)
        ellipse [x radius=\r, y radius={0.34*\r}];
    }
    \foreach \r in {0.15,0.35,0.55,0.75,0.95,1.15,1.35} {
      \draw[wave]
        (6.25,1.15)
        ellipse [x radius=\r, y radius={0.34*\r}];
    }
  \end{scope}
  \draw[plane]
    (0,0) -- (9,0) -- (7.4,2.2) -- (1.6,2.2) -- cycle;
\end{scope}

\begin{scope}[yshift=3.2cm]
  \begin{scope}
    \clip (0,0) -- (9,0) -- (7.4,2.2) -- (1.6,2.2) -- cycle;
    \foreach \r in {
      0.15,0.35,0.55,0.75,0.95,1.15,1.35,1.55,
      1.75,1.95,2.15,2.35,2.55,2.75,2.95,3.15,3.35,3.55,3.75,3.95,4.15
    } {
      \draw[wave]
        (3.25,1.15)
        ellipse [x radius=\r, y radius={0.34*\r}];
    }
    \foreach \r in {
      0.15,0.35,0.55,0.75,0.95,1.15,1.35,1.55,
      1.75,1.95,2.15,2.35,2.55,2.75,2.95,3.15,3.35,3.55,3.75,3.95,4.15
    } {
      \draw[wave]
        (6.25,1.15)
        ellipse [x radius=\r, y radius={0.34*\r}];
    }
    \node[fill=white, inner sep=1.5pt]
      at (4.75,1.18) {$R$};
  \end{scope}
  \draw[plane]
    (0,0) -- (9,0) -- (7.4,2.2) -- (1.6,2.2) -- cycle;
\end{scope}

\draw[->,line width=0.9pt]
  (9.35,0.2)
  .. controls (9.9,1) and (9.9,2.15) ..
  (9.35,2.95);
  \node[right] at (9.8,1.58) {$t$};

\node[above=2pt] at (4.7,5.3) {$h_1h_2\neq0$};
\node[above=2pt] at (3.25,1.5) {$h_1$};
\node[above=2pt] at (6.25,1.5) {$h_2$};

\end{tikzpicture}
\caption{Two initially separated scalar-field packets $h_1$ and $h_2$ propagate into a
region of overlap $R$ in which $h_1h_2\neq0$.}
\label{fig:two-packets}
\end{figure}

The idea is the following. Consider two real-valued solutions to the Klein-Gordon equation $h_1$ and $h_2$ on $M$, coupled in the usual way to the Maxwell equations. Suppose that on some initial Cauchy surface, the supports of their Cauchy data are disjoint, but that their corresponding solutions later overlap in a region $R$ where $h_1h_2\neq0$, and where we imagine a detector has been placed that is sensitive to these fields. The situation is shown in Figure \ref{fig:two-packets}. For any real number $c$ we can represent the combined field for these two packets by,
\begin{equation}
  \psi_c=h_1+e^{ic}h_2
\end{equation}
where $c$ parametrises the relative phase between the two packets in a given gauge. The gauge invariant scalar $|\psi_c|^2$ of the combined field is given by
\begin{equation}\label{eq:gauge-invariant-scalar}
  |\psi_c|^2 = h_1^2 + h_2^2 + h_2h_1\left(e^{ic} + e^{-ic}\right) = h_1^2 + h_2^2 + 2h_2h_1\cos c.
\end{equation}
For an idealised detector whose response is proportional to $|\psi|^2$, the final term $f(c) = 2h_2h_1\cos c$ describes the detection of interference between the two packets, since it is the cross-term that remains after subtracting the responses due to the individual packet contributions $h_1^2$ and $h_2^2$. As expected, interference is zero on the initial Cauchy surface where $h_1h_2=0$. But, it may be non-zero in the future region where they evolve to overlap with $h_1h_2\neq0$.

Let us now adopt the simple potential $A=0$, and consider a curve $\lambda$ in the initial Cauchy surface, with one endpoint $a$ in the first packet satisfying $h_1(a)>0$ and the other endpoint $b$ in the second packet satisfying $h_2(b)>0$. Then, at each point in the future region $R$, the interference $f(c)$ is proportional to the real part of relative phase: since $\Theta_{0,\psi_c} = e^{-ic} = \cos c - i\sin c$, we get that $\Re\Theta_{0,\psi_c} = \cos c$, and hence the interference term is given by,
\begin{equation}
  f_c = 2 h_2h_1\cos c = 2h_2h_1\Re\Theta_{0,\psi_c}.
\end{equation}

We will restrict attention to the cases in which $c=0$ or $c=\pi$. First, imposing the condition that $A=0$ now places further constraints on $\psi_c$ so as to ensure the Maxwell equations are satisfied. This is because the four-current $J_c$ for our field $\psi_c$ is given\footnote{Since $A=0$, we have $D_A=d$, and therefore $\bar\psi_cD_A\psi_c=h_1dh_1+h_2dh_2+e^{ic}h_1dh_2+e^{-ic}h_2dh_1$. The first two terms are real, so taking the imaginary part gives $J_c=q\sin(c)(h_1dh_2-h_2dh_1)$.} by $J_c=q\sin(c)(h_1dh_2-h_2dh_1)$. But, since we have set $A=0$, we get that $F=0$ too, and so the Maxwell equations require that $J=0$ as well. We ensure this by considering only cases in which $c=0$ or $c=\pi$, since then $\sin c=0$ and so the four-current $J_c$ vanishes on the entire spacetime. However, since $\cos 0 = 1$ and $\cos \pi = -1$, these are also cases in which there may be non-trivial interference $f_0$ and $f_\pi$ in future region $R$. These cases thus arise from different values of relative phase $\Theta=1$ and $\Theta=-1$ on the initial data surface. In other words, the two initial data descriptions $c=0$ and $c=\pi$ dynamically evolve into two measurably distinct interference patterns described by the system's relative phase, whereas the curvature and four-current remain zero throughout each process. As a result, this interference effect cannot be reduced to facts about curvature and four-current. Our observation can be summarised in the following (a proof is given in the Appendix):

\begin{restatable}{proposition}{relationalprop}\label{prop:relational} 
Let $M=\RR^4$ be equipped with the Minkowski metric, and let $h_1$ and $h_2$ be complete real-valued solutions to the free Klein-Gordon equation with smooth, compactly supported Cauchy data whose supports are disjoint on an initial Cauchy surface. Suppose there are points $a$ and $b$ in their respective supports such that $h_1(a)>0$ and $h_2(b)>0$. Let $h_1h_2\neq0$ throughout some future region $R$. Finally, let
\begin{equation}\label{eq:relative-phase-pair}
  \psi_0:=h_1+h_2, \qquad \psi_\pi:=h_1-h_2.
\end{equation}
Then the two models $(A,\psi) = (0,\psi_0)$ and $(A',\psi') = (0,\psi_\pi)$ satisfy the following:
\begin{enumerate}
  \item[(a)] each has curvature and four-current given by $(F,J)=(0,0)$ everywhere;
  \item[(b)] each is a complete solution to the coupled Maxwell-Klein-Gordon equations;
  \item[(c)] the two models agree on all point-local gauge invariant quantities on the initial surface;
  \item[(d)] the two models are not related by a gauge transformation;
  \item[(e)] the two models display different relative phase with respect to a curve $\lambda$ with endpoints $a,b$, and display different interference in $R$.
\end{enumerate}
\end{restatable}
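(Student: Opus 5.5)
I will verify the five clauses in turn, treating (a) and (b) together, since the key computation is the four-current. With $A=A'=0$ we have $F=dA=0$ identically. For the current I will use the formula from the footnote to Equation \eqref{eq:matter-current}, $J=q\Imag(\bar\psi D_A\psi)$. Since $D_0=d$ and $\psi_0,\psi_\pi$ are real-valued, $\bar\psi\,d\psi$ is real and so $J=0$. This is just the case $c\in\{0,\pi\}$ of the general formula $J_c=q\sin(c)(h_1dh_2-h_2dh_1)$. For (b), the coupled system consists of $\hodge d\hodge F=J$, $dF=0$, and $D_A^aD_{Aa}\psi - m^2\psi=0$ (with the paper's sign conventions). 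With $A=0$ the matter equation reduces to the free Klein--Gordon equation. This holds for $h_1\pm h_2$ by linearity, since each $h_i$ is a complete solution. The Maxwell equations hold because both sides vanish by (a). Global existence and smoothness on all of $\RR^4$ follow from the assumed completeness of $h_1,h_2$.

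For (c), I first need to fix what a point-local gauge invariant quantity is: a gauge invariant function of the $k$-jet of $(A,\psi)$ at a point, as in footnote \ref{fn:point-local}. The key observation concerns the initial surface $\Sigma$. Let $S_i$ be the support of the Cauchy data of $h_i$. I will assume the jet being compared is the full spacetime jet along $\Sigma$, and that this is determined by Cauchy data through the equation. Under that assumption, near points of $\Sigma$ outside $S_2$ the jets of $\psi_0$ and $\psi_\pi$ both equal the jet of $h_1$. Near points outside $S_1$ they are $h_2$ and $-h_2$. Everywhere else on $\Sigma$ both vanish to all orders. Now take the constant gauge transformation $\chi=\pi/q$, which is allowed because $e^{iq\chi}=-1$ and $d\chi=0$. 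It maps the jet of $(0,h_2)$ to that of $(0,-h_2)$ while keeping $A=0$. So at every point of $\Sigma$ the jets of the two models are gauge related, and every point-local gauge invariant quantity agrees. The step I expect to be the main obstacle is making this rigorous. Time derivatives must be controlled: I will use that on $\Sigma$, all normal derivatives of $h_i$ are determined by the Klein--Gordon equation from Cauchy data, so they are supported in $S_i$ too. I must also handle boundary points of the supports, where one uses smoothness (the data vanish to all orders there) and disjointness of the closed supports. The gauge function is allowed to differ from point to point, which is legitimate precisely because point-local quantities see only one jet at a time.

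For (d) and (e), I will use the gauge invariance of $\Theta$ established after Equation \eqref{eq:relative-phase}. Take a curve $\lambda$ from $a$ to $b$; completeness of $h_i$ ensures $h_1(a)>0$, $h_2(b)>0$, and $h_2(a)=h_1(b)=0$ by disjointness. Then $\Theta_{0,\psi_0}(\lambda)=1$, while $\Theta_{0,\psi_\pi}(\lambda)=\bar\psi_\pi(b)\psi_\pi(a)/|\cdot|=-1$. Since $\Theta$ is gauge invariant and takes different values, the two models cannot be gauge equivalent, which gives (d) and the first half of (e). Alternatively, for (d) I can note directly that a gauge transformation preserving $A=0$ has $d\chi=0$, so on connected $\RR^4$ it is a global constant phase. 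Such a phase cannot map $h_1+h_2$ to $h_1-h_2$, because it would require $e^{iq\chi}=1$ at $a$ and $=-1$ at $b$. Finally, Equation \eqref{eq:gauge-invariant-scalar} gives $|\psi_0|^2-|\psi_\pi|^2=4h_1h_2\neq0$ throughout $R$. So the gauge invariant detector responses, equivalently the interference terms $f_0=2h_1h_2=-f_\pi$, differ there, completing (e).
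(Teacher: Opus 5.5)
Your proposal is correct and follows the paper's proof essentially clause by clause. You get $J=0$ from the reality of $\psi$, (b) from linearity, and (c) from a pointwise constant gauge transformation ($\chi=\pi/q$ near the second packet) together with the Klein--Gordon recursion for normal derivatives; (d) uses the constancy of $\chi$, and (e) uses $f_0-f_\pi=4h_1h_2$. Your primary argument for (d) deduces gauge inequivalence from the gauge invariance of $\Theta$ and the values $\Theta=\pm1$; this is a slightly cleaner variant that makes (d) a corollary of the first half of (e), but it is not a substantively different route.
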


This result provides a straightforward measurement protocol for detecting the difference in these models empirically:
\begin{enumerate}
  \item Prepare two separated packets;
  \item allow them to propagate into an overlap region; and
  \item measure the resulting degree of interference in that region.
\end{enumerate}
Indeed, a familiar analogue is the propagation of two classical waves from different sources with a phase shifter, which later overlap and create an interference pattern on a detector screen. Beam-splitter experiments of this kind were popularised by \citet{tHooft1980g} and appear in discussions of the empirical significance of gauge theory given by \citet{bradingbrown2004gauge}, \citet{Struyve2011}, \citet{greaveswallace2014symmetries}, \citet{teh2016gg}, and \citet{MurgueitioRamirez2022p}; see also \citet{gomes2021h,gomes2025el}.

We have thus established that, even on a contractible manifold, there is a gauge invariant quantity that cannot be determined by curvature, and which describes the outcome of a concrete interference experiment resulting from a dynamically evolving field. Since two fields can be chosen so as to produce different experimental outcomes for the same Maxwell models $(F,J)$, we can see no plausible sense in which the `physically significant' or `ontologically relevant' quantities in gauge electromagnetism are reducible to $(F,J)$ alone.

\subsection{Inequivalence of $EM_1$ and $EM_2$}\label{sec:consequences-for-equivalence}

Some of the philosophical consequences that have been drawn from Curvature Fundamentalism must now be revisited. For example, \citet{weatherall2016ng,weatherall2016ug} has used Curvature Fundamentalism to argue that $EM_1$ and $EM_2$ are equivalent:
\begin{quote}
    ``We stipulate that on both formulations, the empirical content of a model is exhausted by its associated Faraday tensor. In this sense, the theories are empirically equivalent, since for any model of $EM_1$, there is a corresponding model of $EM_2$ with the same empirical content (for some fixed $J^a$), and vice versa.'' \citep[p.1078]{weatherall2016ng}
\end{quote}
Many philosophers agree. For example, \citet[p.5]{wolfread2023} cite a ``near-universal consensus'' that the formulations are theoretically equivalent on a contractible manifold, with similar remarks recently made by \citet[p.1258]{moller-nielsen2017i}, \citet[p.677]{nguyentehwells2020g}, and \citet[Proposition 2]{chen2024s}.

This claim can only be reasonably entertained in the special case that we called an `awkward' comparison in Section \ref{subs:curvature-fundamentalism}. Suppose one restricts attention to contractible base manifolds, and to the sector of electromagnetism in which there are no charged matter fields unless they are determined by $J$. Then it is not implausible to view $EM_1$ and $EM_2$ as equivalent. But, once dynamical matter is included, not only does Curvature Fundamentalism fail, but the equivalence between $EM_1$ and $EM_2$ fails as well. The two models $(A_1,\psi_1) = (0,\psi_0)$ and $(A_2,\psi_2) = (0,\psi_\pi)$ of $EM_2$ constructed above are empirically distinguishable and gauge inequivalent, but correspond to the very same model $(F,J)$ of $EM_1$, even on a contractible base manifold. As a result, these theories are not equivalent, in that there are measurable experimental results predicted by one but not the other.

Our result has further implications for the broader reductionist programme: we have shown that \emph{no} gauge invariant point-local quantities constructed from $(A,\psi)$ can predict the interference experiment above: in our two models, all gauge invariant point-local quantities agree on the initial data surface, despite that surface evolving into two different interference patterns in the future. As a result, \emph{no reduction to gauge invariant point-local (e.g. tensorial) quantities constructed from $(A,\psi)$ on an initial Cauchy surface can adequately capture the empirical predictions of gauge electromagnetism.} A complete reduction must instead include cross-region, path-dependent information not defined at a point, either by including quantities like relative phase or through some other equivalent structure. That said, it still may be possible to formulate such an enriched reduced theory entirely in gauge-invariant terms. We also view the rich structure of gauge-invariant quantities like relative phase as good reason to be sophisticated about symmetries as well, in the sense of \citet{dewar2019s}. However, with charged matter included, there is no longer an equivalence argument against sophistication of the kind suggested by \citet[p.14]{BradleyWeatherall2026s} in the matter-free sector.

\section{Conclusion}

Two of the founders of gauge electromagnetism drew the following conclusion:
\begin{quote}
``(a) The field strength $f_{\mu\nu}$ underdescribes electromagnetism, i.e. different physical situations in a region may have the same $f_{\mu\nu}$. (b) The phase (1) overdescribes electromagnetism, i.e., different phases in a region may describe the same physical situation.'' \citep[p.3856]{wuyang1975g}.
\end{quote}
Our conclusion is similar. On a contractible region, curvature provides a complete description of the matter-field-free electromagnetic connection, in that it determines the potential up to gauge transformation. However, if charged matter is included, or if contractibility is relaxed, then it does not. There are quantities like relative phase whose dynamics are not entirely visible to curvature and four-current. In this more complete context, Curvature Fundamentalism fails in all its forms, and the equivalence between $EM_1$ and $EM_2$ comes apart. 

One of the most striking philosophical consequences of our analysis is for the general project of reduction. This is the idea that one ought to be able to reduce `gauge quantities' to `spacetime quantities', the latter of which are typically formulated using fields defined point-locally like tensors and spinors. However, relative phase is defined across multiple points. Our result shows that its predictions cannot be captured by any reduction to gauge invariant point-local quantities constructed from \((A,\psi)\) on an initial Cauchy surface. We do not see this as a reductio of the reductionist project. However, our analysis does lead to a question that has not been widely appreciated by philosophers: what sort of quantities is a reductionist formulation of electromagnetism about? The status of this question appears to remain open, and worthy of the reductionist's pursuit.

\section*{Appendix}

\relationalprop*
\begin{proof}
  (a) From $A=0$ we have that $F=dA=0$. Since $\psi$ is real, it also follows that $J=0$, since $\bar\psi D_A\psi=\psi d\psi$ is real, so $J=q\Imag(\bar\psi D_A\psi)=0$ as well.

  (b) We have just shown the Maxwell equations are satisfied with $(F,J)=(0,0)$. That $\psi_0$ and $\psi_\pi$ are also complete real-valued solutions to the free Klein-Gordon equation follows from our assumption that $h_1$ and $h_2$ are complete solutions, together with the fact that the Klein-Gordon equation is linear when $A=0$. 

  (c) Since the supports of the Cauchy data are disjoint on the initial Cauchy surface, every point on that surface admits a neighbourhood on which the Cauchy data of at most one packet are nonzero. On a neighbourhood of the first packet the two data sets coincide; on a neighbourhood of the second they differ by a factor of $-1$; outside both they vanish. As a result, all tangential finite jets are locally related by a constant gauge transformation. But, the Klein--Gordon equation recursively determines higher normal derivatives from the Cauchy data \citep[p.246]{wald1984gr}, and so the full finite spacetime jets are related by the same constant gauge transformation. Since the electromagnetic data agree as well, every point-local gauge-invariant function of these jets therefore takes the same value in the two models on the initial surface.

  (d) Suppose for reductio that the two models are related by a gauge transformation, with $A\mapsto A+d\chi$ and $\psi_\pi = e^{iq\chi}\psi_0$. The former preserves $A=0$, so $d\chi=0$. On a connected base manifold this implies $\chi$ is constant. But, $\chi$ cannot be constant, because at $a$, where $h_1(a)>0$, we have $\psi_0 = h_1 = \psi_\pi$ with $e^{iq\chi}=1$, while at $b$, where $h_2(b)>0$, we have $\psi_0 = h_2 = -\psi_\pi$ with $e^{iq\chi}=-1$. Thus, the two models cannot be gauge equivalent.

  (e) By explicit calculation, the relative phase is $\Theta_{0,\psi_0}(\lambda)=1$ for $\psi_0$ and $\Theta_{0,\psi_\pi}(\lambda)=-1$ for $\psi_\pi$. The interference of each model is given by $f_c = 2h_1h_2\Re \Theta_{0,\psi_c}$. Thus,
  \begin{equation}
    f_0 - f_\pi = 2h_1h_2\Re(\Theta_{0,\psi_0} - \Theta_{0,\psi_\pi}) = 4h_1h_2,
  \end{equation}
which by hypothesis is non-zero in $R$. Thus, these two models display different interference.
\end{proof}

\setstretch{1.0}

\end{document}